\def\BibTeX{{\rm B\kern-.05em{\sc i\kern-.025em b}\kern-.08em
    T\kern-.1667em\lower.7ex\hbox{E}\kern-.125emX}}
\newtheorem{theorem}{Theorem}[section]
\newtheorem{lemma}{Lemma}[section]
\begin{document}
%
% paper title
% Titles are generally capitalized except for words such as a, an, and, as,
% at, but, by, for, in, nor, of, on, or, the, to and up, which are usually
% not capitalized unless they are the first or last word of the title.
% Linebreaks \\ can be used within to get better formatting as desired.
% Do not put math or special symbols in the title.
\title{Quantum Support Vector Machine without Iteration}
%
%
% author names and IEEE memberships
% note positions of commas and nonbreaking spaces ( ~ ) LaTeX will not break
% a structure at a ~ so this keeps an author's name from being broken across
% two lines.
% use \thanks{} to gain access to the first footnote area
% a separate \thanks must be used for each paragraph as LaTeX2e's \thanks
% was not built to handle multiple paragraphs
%
%
%\IEEEcompsocitemizethanks is a special \thanks that produces the bulleted
% lists the Computer Society journals use for "first footnote" author
% affiliations. Use \IEEEcompsocthanksitem which works much like \item
% for each affiliation group. When not in compsoc mode,
% \IEEEcompsocitemizethanks becomes like \thanks and
% \IEEEcompsocthanksitem becomes a line break with idention. This
% facilitates dual compilation, although admittedly the differences in the
% desired content of \author between the different types of papers makes a
% one-size-fits-all approach a daunting prospect. For instance, compsoc
% journal papers have the author affiliations above the "Manuscript
% received ..."  text while in non-compsoc journals this is reversed. Sigh.

\author{Rui~Zhang,
        Jian~Wang*,
        Nan~Jiang**,
        and~Zichen~Wang,% <-this % stops a space
\IEEEcompsocitemizethanks{\IEEEcompsocthanksitem Rui Zhang and Jian Wang are with the Beijing Key Laboratory of Security and Privacy in Intelligent Transportation, Beijing Jiaotong University, Beijing 100044, China.
\protect\\
% note need leading \protect in front of \\ to get a newline within \thanks as
% \\ is fragile and will error, could use \hfil\break instead.
E-mail: wangjian@bjtu.edu.cn
\IEEEcompsocthanksitem Nan Jiang and Zichen Wang are with the Faculty of Information Technology, Beijing University of Technology, Beijing 100124, China. \protect\\
E-mail: e-mail: jiangnan@bjut.edu.cn.}% <-this % stops an unwanted space
\thanks{Manuscript received April , ; revised August , .}}

% note the % following the last \IEEEmembership and also \thanks -
% these prevent an unwanted space from occurring between the last author name
% and the end of the author line. i.e., if you had this:
%
% \author{....lastname \thanks{...} \thanks{...} }
%                     ^------------^------------^----Do not want these spaces!
%
% a space would be appended to the last name and could cause every name on that
% line to be shifted left slightly. This is one of those "LaTeX things". For
% instance, "\textbf{A} \textbf{B}" will typeset as "A B" not "AB". To get
% "AB" then you have to do: "\textbf{A}\textbf{B}"
% \thanks is no different in this regard, so shield the last } of each \thanks
% that ends a line with a % and do not let a space in before the next \thanks.
% Spaces after \IEEEmembership other than the last one are OK (and needed) as
% you are supposed to have spaces between the names. For what it is worth,
% this is a minor point as most people would not even notice if the said evil
% space somehow managed to creep in.

% The paper headers
\markboth{Journal of \LaTeX\ Class Files,~Vol.~, No.~, ~}%
{Shell \MakeLowercase{\textit{et al.}}: Quantum Support Vector Machine without Iteration}
% The only time the second header will appear is for the odd numbered pages
% after the title page when using the twoside option.
%
% *** Note that you probably will NOT want to include the author's ***
% *** name in the headers of peer review papers.                   ***
% You can use \ifCLASSOPTIONpeerreview for conditional compilation here if
% you desire.

% The publisher's ID mark at the bottom of the page is less important with
% Computer Society journal papers as those publications place the marks
% outside of the main text columns and, therefore, unlike regular IEEE
% journals, the available text space is not reduced by their presence.
% If you want to put a publisher's ID mark on the page you can do it like
% this:
%\IEEEpubid{0000--0000/00\$00.00~\copyright~2015 IEEE}
% or like this to get the Computer Society new two part style.
%\IEEEpubid{\makebox[\columnwidth]{\hfill 0000--0000/00/\$00.00~\copyright~2015 IEEE}%
%\hspace{\columnsep}\makebox[\columnwidth]{Published by the IEEE Computer Society\hfill}}
% Remember, if you use this you must call \IEEEpubidadjcol in the second
% column for its text to clear the IEEEpubid mark (Computer Society jorunal
% papers don't need this extra clearance.)

% use for special paper notices
%\IEEEspecialpapernotice{(Invited Paper)}

% for Computer Society papers, we must declare the abstract and index terms
% PRIOR to the title within the \IEEEtitleabstractindextext IEEEtran
% command as these need to go into the title area created by \maketitle.
% As a general rule, do not put math, special symbols or citations
% in the abstract or keywords.
\IEEEtitleabstractindextext{%
\begin{abstract}
Quantum algorithms can enhance machine learning in different aspects. In 2014, Rebentrost $et~al.$ constructed a least squares quantum support vector machine (LS-QSVM), in which the Swap Test plays a crucial role in realizing the classification. However, as the output states of a previous test cannot be reused for a new test in the Swap Test, the quantum algorithm
LS-QSVM has to be repeated in preparing qubits, manipulating operations, and carrying out the measurement.
This paper proposes a QSVM based on the generalized quantum amplitude estimation (AE-QSVM) which gets rid of the constraint of repetitive processes and saves the quantum resources.
At first, AE-QSVM is trained by using the quantum singular value decomposition. Then, a query sample is classified by using the generalized quantum amplitude estimation in which high accuracy can be achieved by adding auxiliary qubits instead of repeating the algorithm.
The complexity of AE-QSVM is reduced to
$O(\kappa^{3}\varepsilon^{-3}(log(mn)+1))$ with an accuracy $\varepsilon$,
where $m$ is the number of training vectors, $n$ is the dimension of the feature space, and $\kappa$ is the condition number. Experiments demonstrate that AE-QSVM is advantageous in terms of training matrix,
the number of iterations, space complexity, and time complexity.
\end{abstract}

% Note that keywords are not normally used for peerreview papers.
\begin{IEEEkeywords}
Quantum support vector machine, quantum amplitude estimation, quantum singular value decomposition, quantum inner estimation, quantum computing.
\end{IEEEkeywords}}

% make the title area
\maketitle

% To allow for easy dual compilation without having to reenter the
% abstract/keywords data, the \IEEEtitleabstractindextext text will
% not be used in maketitle, but will appear (i.e., to be "transported")
% here as \IEEEdisplaynontitleabstractindextext when the compsoc
% or transmag modes are not selected <OR> if conference mode is selected
% - because all conference papers position the abstract like regular
% papers do.
\IEEEdisplaynontitleabstractindextext
% \IEEEdisplaynontitleabstractindextext has no effect when using
% compsoc or transmag under a non-conference mode.

% For peer review papers, you can put extra information on the cover
% page as needed:
% \ifCLASSOPTIONpeerreview
% \begin{center} \bfseries EDICS Category: 3-BBND \end{center}
% \fi
%
% For peerreview papers, this IEEEtran command inserts a page break and
% creates the second title. It will be ignored for other modes.
\IEEEpeerreviewmaketitle

\IEEEraisesectionheading{\section{Introduction}\label{sec:introduction}}
% Computer Society journal (but not conference!) papers do something unusual
% with the very first section heading (almost always called "Introduction").
% They place it ABOVE the main text! IEEEtran.cls does not automatically do
% this for you, but you can achieve this effect with the provided
% \IEEEraisesectionheading{} command. Note the need to keep any \label that
% is to refer to the section immediately after \section in the above as
% \IEEEraisesectionheading puts \section within a raised box.

% The very first letter is a 2 line initial drop letter followed
% by the rest of the first word in caps (small caps for compsoc).
%
% form to use if the first word consists of a single letter:
% \IEEEPARstart{A}{demo} file is ....
%
% form to use if you need the single drop letter followed by
% normal text (unknown if ever used by the IEEE):
% \IEEEPARstart{A}{}demo file is ....
%
% Some journals put the first two words in caps:
% \IEEEPARstart{T}{his demo} file is ....
%
% Here we have the typical use of a "T" for an initial drop letter
% and "HIS" in caps to complete the first word.
\IEEEPARstart{I}{n} recent years, machine learning techniques have become powerful tools for finding patterns in data, for instance image recognition \cite{Yeung2021Image}, automated driven cars \cite{Zhang2021DDE}, network security \cite{He2021Research}, and \emph{etc}. %\cite{Liu2021When,Jennings2012Foundations,Wang2019Approximate}
The development of machine learning has rapidly increased the demand for computing power in hardware.
As the spacing of transistors approaches the physical limit of process manufacturing, it has been a bottleneck that researchers intend to improve the operating capability of the classical computer.
More powerful ways of processing information are needed as the amount of data generated in our society is growing.
%\cite{Zhang2020Recent,Jiang2021When}.
Quantum computation is a promising new paradigm for performing fast computations, with the experimental demonstrations of quantum supremacy marked as the latest milestone \cite{Nielsen2000Quantum, Arute2019Supplementary, Zhong2021Quantum, Xin2020IEEEVPQC}.

The study of quantum computation originated in the 1980s. In 1982, Feynman \cite{Feynman1982Simulating} was the first to proposed the concept of quantum computation. After that the integer factoring problem \cite{Shor1994Algorithms} and the database search algorithm \cite{Grover1996fast} were essential evidences supporting the power of quantum computation. During the first decade of the 21st century, quantum computation appeared in various areas of computer science such as cryptography \cite{Gisin2001Quantum,Salim2020Enhancing,Kulik2022Experimental}, signal processing \cite{Zhou2015Quantum, Jiang2016Quantum, Zhang2022Boundary,Li2018Quantum}, and information theory \cite{Bennett2008Quantum}.
% like factoring algorithm \cite{Shor1994Algorithms}, database search algorithm \cite{Grover1996fast}, gradient descent and Newton algorithm \cite{Rebentrost2019Newton,Li2020Quantum}, and HHL algorithm \cite{Aram2009Quantum}.

Quantum machine learning (QML) is an emerging research area
that attempts to harness the power of quantum information
processing to obtain speedups for classical machine learning
tasks.
Despite the fact that quantum machine learning is a recently surging field, it already encompasses a rich set of quantum techniques and approaches, for example
linear regression \cite{Schuld2016Prediction,Wang2017Quantum,Yu2021An}, clustering analysis \cite{Esma2013Quantum}, dimensionality reduction \cite{Romero2017Quantum,Yu2019Quantum}, data classification \cite{Cong2016Quantum,Schuld2017Implementing,2018Quantum,Dang2018Image}, and neural networks \cite{Li2020Quantumneural,Zhao2019Building,Joshi2021Entanglement}.
Besides, quantum machine learning algorithms have been applied to channel discrimination \cite{Xin2020IEEEVPQC}, vehicle classification \cite{Yu2008Quantum}, and image classification \cite{Cavallaro2020Approaching}.
%Support vector machine is a kind of generalized classifier that classifies labeled data according to supervised learning.

Support vector machine (SVM) is a supervised machine learning technique for solving classification.
In recent years, there have been many studies about quantum SVM (QSVM) \cite{Allcock2020A, Kerenidis2021Quantum, Li2015Experimental, Havenstein2018Comparisons, Ye2020Quantum, Willsch2020Support}, it provides quadratic speedup and was first proposed by Anguita $et~al.$ \cite{Anguita2003Quantum}.
%In 2003, Anguita $et~al.$ \cite{Anguita2003Quantum} put forward a quantum support vector machine by using Grover's search algorithm, which provides quadratic speedup.
Moreover, the least squares QSVM (LS-QSVM) \cite{Rebentrost2014Quantum} given by Rebentrost $et~al.$ provide an exponential speedup compared with the classical algorithm.

In LS-QSVM and its descendants,
the classification was realized by using the Swap Test \cite{Buhrman2001Quantum,Garcia-Escartin2013swap}.
%Specifically, the probability that used to calculate the inner product is obtained by measuring the auxiliary qubit.
The output state is obtained by measuring the ancillary qubit in the Swap Test.
See Fig. \ref{fig1}, $|0\rangle$, $|\rho_{1}\rangle$, and $|\rho_{2}\rangle$ are the input states. For the outcome $|0\rangle$, we
have an entangled state $\frac{|\rho_{1}\rangle|\rho_{2}\rangle+|\rho_{2}\rangle|\rho_{1}\rangle}{\sqrt{2}}$ and for the outcome $|1\rangle$,
$\frac{|\rho_{1}\rangle|\rho_{2}\rangle-|\rho_{2}\rangle|\rho_{1}\rangle}{\sqrt{2}}$. In both cases, it is impossible to completely
separate the input states for a second time.
That is to say, Swap Test is destructive \cite{Garcia-Escartin2013swap}.
In LS-QSVM, however, the success probability $P$ can be obtained to an accuracy $\varepsilon$ by iterating $O(P(1-P)/\varepsilon^{2})$ times of the Swap Test \cite{Rebentrost2014Quantum}.
Hence, if the accuracy $\varepsilon$ is achieved, the completed algorithm must be carried out repeatedly; thus, resulting in a high consumption in qubit and time.
\begin{figure}
  \centering
  % Requires \usepackage{graphicx}
  \includegraphics[width=8.5cm]{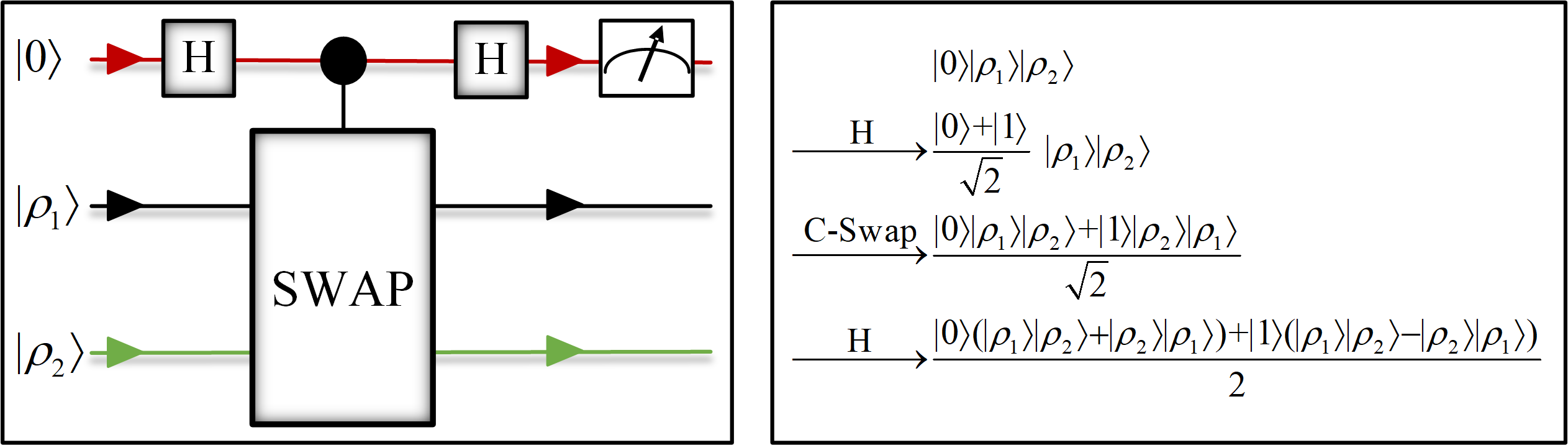} \\
  \caption{The quantum circuit of the Swap Test and its evolution}
  \label{fig1}
\end{figure}

To be specific, as shown in Fig. \ref{fig2}, LS-QSVM inherits the pattern of the classical least squares SVM (LS-SVM), i.e., the sample set was initially trained to obtain the parameter in LS-QSVM and the new sample was classified by using this parameter.
In cases where classical sampling algorithms require polynomial time, an exponential speedup is obtained in LS-QSVM \cite{Rebentrost2014Quantum, Bishwas2018An, Bishwas2016Big, Windridge2018Quantum, Feng2019Quantum}.
However, the speedup was achieved only for one time in training and classification.
As analyzed above, the quantum state collapses due to the use of the Swap Test. Therefore, we have to repeatedly train the sample and carry out a classification to get the classification label with a high accuracy.
\begin{figure}
  \centering
  % Requires \usepackage{graphicx}
  \includegraphics[width=8.5cm]{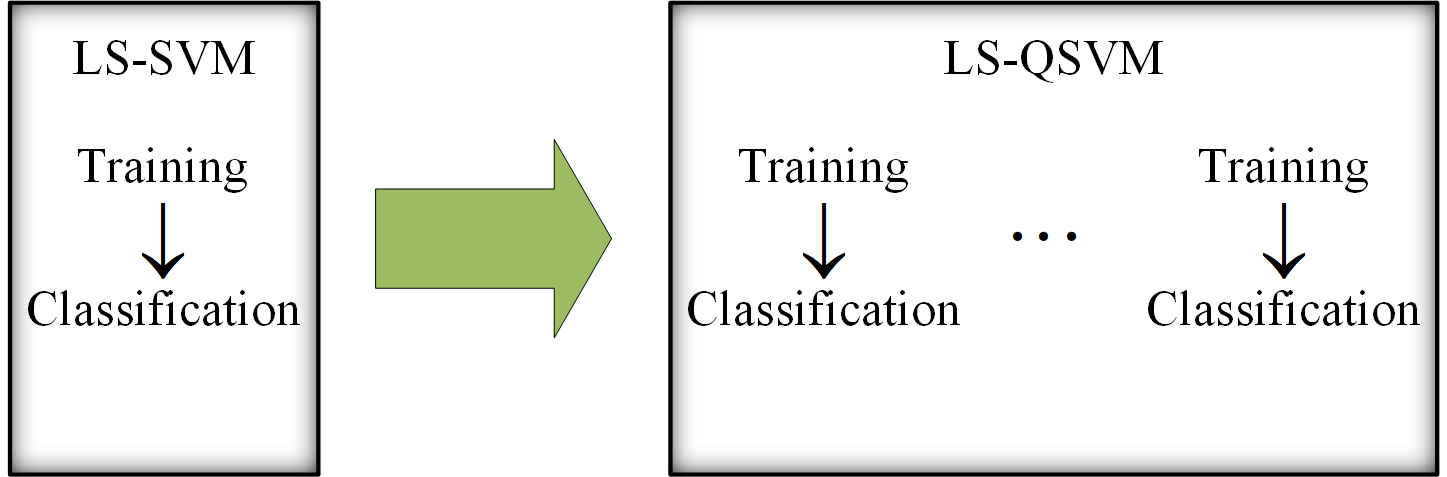} \\
  \caption{The comparation of LS-SVM and LS-QSVM}
  \label{fig2}
\end{figure}

In this work, we make progress for the challenge described above and propose a quantum support vector machine based on amplitude
estimation (AE-QSVM).
At first, the training process of AE-QSVM is realized
by using the quantum singular value decomposition.
Then, we present a method to calculate the inner product based on quantum amplitude estimation to classify the new sample.
Compared with LS-QSVM whose complexity is $O(\frac{\kappa^{3}\varepsilon^{-3}log(mn)+log(n)}{12\varepsilon^{2}})
$, the complexity of AE-QSVM is reduced to
$O(\kappa^{3}\varepsilon^{-3}(log(mn)+1))$ with an accuracy $\varepsilon$,
where $m$ is the number of training vectors, $n$ is the dimension of the feature space, and $\kappa$ is the condition number.

The remainder of the paper is organized as follows:
The related works are given in Section \ref{S2}. We present a general algorithm for quantum amplitude
estimation in Section \ref{S3}. Section \ref{S4} presents AE-QSVM. Simulations are presented in Section \ref{S5}. In Section \ref{S6}, we summarize this paper and discuss further research for AE-QSVM. Appendix reviews the basic concepts of quantum computation.

\section{Related works}\label{S2}
\subsection{Quantum support vector machine}
As shown in \cite{Rebentrost2014Quantum}, the QSVM is based on a least squares version of the
classical SVM \cite{Suykens1999LeastSquares}.

The task for the SVM is to
classify a vector into one of two classes, given $m$ training
data points of the form $\{(\boldsymbol x_{k}, y_{k}): \boldsymbol x_{k}\in \mathds{R}^{n}, y_{k}=\pm1\}_{k=1,\ldots,m}$,
where $y_{k}=1$ or $-1$, depending on the class to which $\boldsymbol x_{k}$
belongs.
In LS-SVM, the problem after applying the optimization of the Lagrangian can be formulated as a linear equation:
\begin{equation}\label{1}
F\begin{pmatrix}b\\ \boldsymbol \alpha\end{pmatrix}=
\begin{pmatrix}0&\boldsymbol 1^{\mathrm{ T }}\\ \boldsymbol 1& K+\gamma^{-1} I_{m}\end{pmatrix}
\begin{pmatrix}b\\ \boldsymbol \alpha\end{pmatrix}=\begin{pmatrix}0\\ \boldsymbol y \end{pmatrix},
\end{equation}
where $\boldsymbol 1=(1,\ldots,1)^{\mathrm{ T }}$, $I_{m}$ represents the $m\times m$ density matrix, $\gamma$ is a hyperparameter describing the ratio of the Lagrangian's components, $\boldsymbol \alpha=(\alpha_{1},\ldots,\alpha_{m})$ is the Lagrange multiplier, $\boldsymbol y=(y_{1},\ldots,y_{m})$ is the label of training set, $b$ is the offset of the hyperplane, and $K$ is an $m\times m$ kernel matrix. According to the Eq. (\ref{1}), for query data $\boldsymbol x\in \mathds{R}^{n}$, the classifier can be determined by the following function:
\begin{equation}\label{2}
f(x)=sign(\sum_{k=1}^{m}\alpha_{k}\boldsymbol x_{k}^{\mathrm{ T }}\boldsymbol x+b).
\end{equation}

In the quantum case, Eq. (\ref{1}) can be compactly rewritten as $F|b, \boldsymbol \alpha\rangle=|0, \boldsymbol y\rangle$.
When the
matrix $F$ is well conditioned and has sparsity polylogarithmic
in the dimension, the quantum state $|b,\boldsymbol \alpha\rangle=F^{-1}|0, \boldsymbol y\rangle$ is produced by using the HHL algorithm \cite{Aram2009Quantum}.
The desired LS-QSVM parameters can be represented as follows:
\begin{equation}\label{3}
|b,\boldsymbol \alpha\rangle=\frac{1}{\sqrt{C}}(b|0\rangle+\sum_{i=1}^{M}\alpha_{k}|k\rangle),
\end{equation}
where $C=b^{2}+\sum_{k=1}^{M}\alpha_{k}^{2}$.

For a classification task, the training data oracle is constructed as follows:

\begin{equation}\label{11}
  |\tilde{\boldsymbol\mu}\rangle=\frac{1}{\sqrt{N_{\tilde{\boldsymbol\mu}}}}(b|0\rangle|0\rangle+
  \sum_{k=1}^{m}\alpha_{k}|\boldsymbol x_{k}||k\rangle|\boldsymbol x_{k}\rangle),
\end{equation}
with $N_{\tilde{\mu}}=b^{2}+\sum_{k=1}^{m}\alpha_{k}^{2}|\boldsymbol x_{k}|^{2}$. In addition, the query state is constructed as follows:
\begin{equation}\label{12}
  |\tilde{\boldsymbol z}\rangle=\frac{1}{\sqrt{N_{\tilde{\boldsymbol z}}}}(|0\rangle|0\rangle+\sum_{k=1}^{m}|\boldsymbol x||k\rangle|\boldsymbol x\rangle|0\rangle),
\end{equation}
with $N_{\tilde{\boldsymbol z}}=m|\boldsymbol x|^{2}+1$.
For the
physical implementation of the inner product, Swap Test
can be utilized to project the ancillary qubit in state
$1/\sqrt{2}(|0\rangle|\tilde{\boldsymbol\mu}\rangle+|1\rangle|\tilde{\boldsymbol z}\rangle)$ to state $1/\sqrt{2}(|0\rangle-|1\rangle)$. Then the success probability is $P=1/2(1-\langle\tilde{\boldsymbol\mu}|\tilde{\boldsymbol z}\rangle)$, which can be used to
determine the label of $|\boldsymbol x\rangle$ according to the sign of $1/2-P$.
%In LS-SVM, however, the input matrices are not expected to be sparse and well structured, thus restricting the applications of the HHL algorithm.

However, LS-QSVM only supports the binary classification.
Bishwas $et~al.$ handled the quantum multiclass classification problem by using all-pair technique \cite{Bishwas2018An} and one-against-all approach \cite{Bishwas2016Big}, respectively.
Hou  $et~al.$ \cite{Hou2020Quantum} presented semi-supervised SVM, which exhibited a quadratic speed-up over classical algorithm.
Feng $et~al.$  \cite{Feng2019Quantum} present an improved
QSVM model, exponentially improving the dependence on precision while keeping essentially the same dependence on other parameters. Besides,
the SVM with the quantum kernel algorithm was also proposed to solve classification problems \cite{Bishwas2020Gaussian,Schuld2018Quantum}.
\subsection{Quantum amplitude estimation}

Quantum amplitude estimation (QAE) algorithm \cite{Brassard2000Quantum} is a fundamental quantum algorithm that allows a quantum computer to
estimate the amplitude $U|0\rangle$ for a quantum circuit $U$.
QAE \cite{Brassard2000Quantum, Vazquez2021Efficient, Grinko2021Iterative} has attracted much attention as a fundamental subroutine of a wide range of application-oriented quantum algorithms, such as the Monte Carlo integration \cite{Montanaro2015Quantum, Rebentrost2018Quantum, Miyamoto2020Reduction} and machine learning tasks \cite{Miyahara2019Quantum, Kerenidis2019q}.
QAE was first introduced by Brassard $et~al.$ \cite{Brassard2000Quantum}, in which a unitary operator $\mathcal{A}_{1}$ acts on an initial state $|0\rangle^{\otimes p}$ and an ancillary $|0\rangle$ such that:
\begin{equation}\label{13}
\mathcal{A}_{1}|0\rangle^{\otimes p}|0\rangle=\sqrt{a_{1}}|\psi_{1}\rangle|1\rangle+\sqrt{1-a_{1}}|\psi_{0}\rangle|0\rangle.
\end{equation}
Let $|\varphi_{1}\rangle=\sqrt{a_{1}}|\psi_{1}\rangle|1\rangle$,  $|\varphi_{0}\rangle=\sqrt{1-a_{1}}|\psi_{0}\rangle|0\rangle$, and $|\Phi\rangle=|0\rangle^{\otimes p}|0\rangle$. Then
\begin{equation}\label{14}
|\varphi\rangle=\mathcal{A}_{1}|\Phi\rangle=|\varphi_{1}\rangle+|\varphi_{0}\rangle.
\end{equation}
Following \cite{Brassard2000Quantum},
$|\varphi_{1}\rangle$ and $|\varphi_{0}\rangle$ are respectively
the $p$-qubit normalized good and bad states.
QAE is to estimate the probability that the measuring
$|\varphi\rangle$ yields a good state, i.e., $a_{1}=\langle\varphi_{1}|\varphi_{1}\rangle$.
The probability to measure the good state can be amplified by applying the following unitary operator:
\begin{equation}\label{15}
\mathcal{Q}_{1}=-\mathcal{A}_{1}\mathcal{S}_{0}\mathcal{A}_{1}^{-1}\mathcal{S}_{\varphi_{1}},
\end{equation}
where $\mathcal{A}_{1}^{-1}$ is the inverse of $\mathcal{A}_{1}$, $\mathcal{S}_{0}=I-2|0\rangle\langle0|$, and $\mathcal{S}_{\varphi_{1}}=I-2|\varphi_{1}\rangle\langle\varphi_{1}|$.

In QAE, it uses $l$ ancillary qubits initialized in equal superposition to represent the final
result. Then it
applies the operator $\mathcal{Q}_{1}$ controlled by the
ancillary qubits. Eventually, it performs an inverse quantum fourier transformation ($FT^{+}$) on the ancillary qubits before they are measured. Subsequently, the measured integer $y\in{0,\ldots, L-1}$ ($L=2^{l}$) is mapped
to an angle $\tilde{\theta}_{a_{1}}=y\pi/L$. Thereafter, the resulting estimate of $a_{1}$ is
defined as $\tilde{a}_{1}=sin^{2}\tilde{\theta}_{a_{1}}$.

\section{The generalized quantum amplitude estimation}\label{S3}
In QAE \cite{Brassard2000Quantum}, it is assumed that the initial state of the interest problem is $|0\rangle^{\otimes p}$. %$\mathcal{S}_{0}$ paly a key role in Eq. (\ref{15}) that relies on the initial state.
In this section, we describe another method for the case where the
initial state is arbitrary state $|\Phi\rangle$. Our method is the generalization of the quantum amplitude estimation (GQAE).

Let $\mathcal{A}$ be a quantum algorithm that acts on the
initial state $|\Phi_{1}\rangle$ and an ancillary qubit $|0\rangle$ such that:
\begin{equation}\label{15.9}
\mathcal{A}|\Phi_{1}\rangle|0\rangle=|\Psi_{1}\rangle+|\Psi_{0}\rangle.
\end{equation}
Let $|\Phi\rangle=|\Phi_{1}\rangle|0\rangle$, we have
\begin{equation}\label{16}
|\Psi\rangle=\mathcal{A}|\Phi\rangle=|\Psi_{1}\rangle+|\Psi_{0}\rangle.
\end{equation}

We will present how to estimate the probability $a$ when measuring $|\Psi\rangle$ produces a good state, where $a=\langle\Psi_{1}|\Psi_{1}\rangle$.

The amplification amplitude operation Eq. (\ref{17}) is hereby important for estimation.
\begin{equation}\label{17}
\mathcal{Q}=-\mathcal{A}\mathcal{S}_{\Phi}\mathcal{A}^{-1}\mathcal{S}_{\chi},
\end{equation}
where $\mathcal{S}_{\chi}=I-2|\Psi_{1}\rangle\langle\Psi_{1}|$ conditionally changes the sign of the amplitudes of the good states,
\begin{equation}\label{18}
|x\rangle\longmapsto
\begin{cases}
-|x\rangle& \text{ if } x\text{~is~good~state}, \\
|x\rangle& \text{ if } x \text{~is~bad~state},
\end{cases}
\end{equation}
and the operator $\mathcal{S}_{\Phi}=I-2|\Phi\rangle\langle\Phi|$ changes the sign of the amplitude if and only if the
state is the initial state $|\Phi\rangle$.

In particular, it is the QAE proposed in \cite{Brassard2000Quantum},
if $|\Phi\rangle=|0\rangle^{\otimes s}|0\rangle$, where $s$ is the number of qubits used to represent $|\Phi_{1}\rangle$.

We will show that the quantum state $|\Psi\rangle$ can still be written as a linear combination of $|\Psi_{1}\rangle$ and $|\Psi_{0}\rangle$ after performing $j$ times of operator $\mathcal{Q}$.

\begin{lemma}\label{L3.1}
We have that
\begin{eqnarray}\label{19}
% \nonumber to remove numbering (before each equation)
\mathcal{Q}|\Psi_{1}\rangle &=& (1-2a)|\Psi_{1}\rangle-2a|\Psi_{0}\rangle, \\
\mathcal{Q}|\Psi_{0}\rangle &=& 2(1-a)|\Psi_{1}\rangle+(1-2a)|\Psi_{0}\rangle,
\end{eqnarray}
where $a=\langle\Psi_{1}|\Psi_{1}\rangle$.
\end{lemma}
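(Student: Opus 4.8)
The plan is to reduce the action of $\mathcal{Q}$ to a two-dimensional invariant subspace and compute it explicitly there. First I would record the two structural facts that the setup of Eq. (\ref{15.9}) supplies: the good and bad states are orthogonal, $\langle\Psi_1|\Psi_0\rangle=0$ (they are distinguished by the value of the ancillary qubit), and $|\Psi\rangle$ is a unit vector, so that $\langle\Psi_1|\Psi_1\rangle=a$ forces $\langle\Psi_0|\Psi_0\rangle=1-a$. Introducing the orthonormal pair $|g\rangle=|\Psi_1\rangle/\sqrt{a}$ and $|b\rangle=|\Psi_0\rangle/\sqrt{1-a}$, I can write $|\Psi\rangle=\sqrt{a}\,|g\rangle+\sqrt{1-a}\,|b\rangle$ and work entirely within $\mathrm{span}\{|g\rangle,|b\rangle\}$.

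Next I would factor $\mathcal{Q}$ into two reflections. Using $\mathcal{S}_{\Phi}=I-2|\Phi\rangle\langle\Phi|$, the unitarity of $\mathcal{A}$, and $|\Psi\rangle=\mathcal{A}|\Phi\rangle$, the conjugated operator collapses to a reflection about $|\Psi\rangle$,
\[
\mathcal{A}\mathcal{S}_{\Phi}\mathcal{A}^{-1}=I-2|\Psi\rangle\langle\Psi|,
\]
so that $\mathcal{Q}=-(I-2|\Psi\rangle\langle\Psi|)\mathcal{S}_{\chi}$. The operator $\mathcal{S}_{\chi}$ is, by its defining action in Eq. (\ref{18}), the reflection that flips the good component and fixes the bad one, i.e. $\mathcal{S}_{\chi}|g\rangle=-|g\rangle$ and $\mathcal{S}_{\chi}|b\rangle=|b\rangle$.

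Then I would simply evaluate $\mathcal{Q}$ on $|g\rangle$ and $|b\rangle$. Applying $\mathcal{S}_{\chi}$ first and then $-(I-2|\Psi\rangle\langle\Psi|)$, and using $\langle\Psi|g\rangle=\sqrt{a}$ together with $\langle\Psi|b\rangle=\sqrt{1-a}$, a short computation gives
\[
\mathcal{Q}|g\rangle=(1-2a)|g\rangle-2\sqrt{a(1-a)}\,|b\rangle,\qquad \mathcal{Q}|b\rangle=2\sqrt{a(1-a)}\,|g\rangle+(1-2a)|b\rangle,
\]
which exhibits $\mathcal{Q}$ as a rotation by $2\theta_a$ (with $a=\sin^2\theta_a$) in the $\{|g\rangle,|b\rangle\}$ plane. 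Rescaling back through $|\Psi_1\rangle=\sqrt{a}\,|g\rangle$ and $|\Psi_0\rangle=\sqrt{1-a}\,|b\rangle$ clears the square roots and yields exactly Eq. (\ref{19}).

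Finally I would flag the main point of care. Since the definitions of $\mathcal{S}_{\chi}$ and $\mathcal{S}_{\Phi}$ are written through unnormalized kets, the step that genuinely needs justification is that $\mathcal{S}_{\chi}$ acts as a clean reflection ($|g\rangle\mapsto-|g\rangle$) on the \emph{normalized} good state rather than merely rescaling $|\Psi_1\rangle$ by $1-2a$; this is why I lean on the operational definition (\ref{18}) rather than the literal formula. The rest is careful bookkeeping of the $\sqrt{a}$ and $\sqrt{1-a}$ factors when passing between the orthonormal basis $\{|g\rangle,|b\rangle\}$ and the unnormalized states $|\Psi_1\rangle,|\Psi_0\rangle$, and this is the only place where a sign or normalization slip is likely to occur.
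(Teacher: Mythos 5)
Your proof is correct, and it reaches the lemma by a genuinely different organization than the paper's. The paper proceeds by direct term-by-term expansion on the unnormalized states: it writes $\mathcal{A}^{-1}|\Psi_{1}\rangle=|\Phi\rangle-\mathcal{A}^{-1}|\Psi_{0}\rangle$, expands $\mathcal{S}_{\Phi}=I-2|\Phi\rangle\langle\Phi|$ against this, and uses $\langle\Phi|\mathcal{A}^{-1}=\langle\Psi_{1}|+\langle\Psi_{0}|$ plus orthogonality to collect coefficients, repeating the computation separately for $|\Psi_{0}\rangle$. You instead isolate the conjugation identity $\mathcal{A}\mathcal{S}_{\Phi}\mathcal{A}^{-1}=I-2|\Psi\rangle\langle\Psi|$ (which the paper never states, though its algebra is equivalent to it), exhibit $\mathcal{Q}$ as a product of two reflections, compute the resulting rotation in the orthonormal basis $\{|g\rangle,|b\rangle\}$, and rescale back to $|\Psi_{1}\rangle,|\Psi_{0}\rangle$. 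Both arguments rest on identical ingredients — unitarity of $\mathcal{A}$, orthogonality of the good and bad components with squared norms $a$ and $1-a$, and the \emph{operational} reading of $\mathcal{S}_{\chi}$ from Eq.~(\ref{18}). Your route buys three things: the two-reflection/rotation structure makes the eigenvalue claim $e^{\pm i2\theta_{a}}$ of Theorem~\ref{The1} immediate rather than a second separate computation; the bookkeeping in an orthonormal basis is less error-prone; and you explicitly flag the subtlety the paper passes over silently, namely that the literal formula $\mathcal{S}_{\chi}=I-2|\Psi_{1}\rangle\langle\Psi_{1}|$ with unnormalized $|\Psi_{1}\rangle$ would give $\mathcal{S}_{\chi}|\Psi_{1}\rangle=(1-2a)|\Psi_{1}\rangle$, whereas the proof (the paper's included, in its second line) requires $\mathcal{S}_{\chi}|\Psi_{1}\rangle=-|\Psi_{1}\rangle$, i.e.\ the operational definition. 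The only thing the paper's grind buys in exchange is that it never divides by $\sqrt{a}$ or $\sqrt{1-a}$, so it does not need even the implicit nondegeneracy assumption $0<a<1$; in your version the degenerate cases $a\in\{0,1\}$ must be dismissed separately (trivially, since then $|\Psi_{1}\rangle$ or $|\Psi_{0}\rangle$ is the zero vector and the claimed identities hold by inspection).
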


\begin{proof}
First consider the action of the operator $\mathcal{Q}$ on the
vector $|\Psi_{1}\rangle $:
\begin{eqnarray}\label{20}
% \nonumber to remove numbering (before each equation)
\nonumber &&\mathcal{Q}|\Psi_{1}\rangle\\
\nonumber &=& -\mathcal{A}\mathcal{S}_{\Phi}\mathcal{A}^{-1}\mathcal{S}_{\chi} |\Psi_{1}\rangle \\
\nonumber &=& \mathcal{A}\mathcal{S}_{\Phi}\mathcal{A}^{-1}|\Psi_{1}\rangle \\
\nonumber &=&\mathcal{A}(I-2|\Phi\rangle\langle\Phi|)(|\Phi\rangle-\mathcal{A}^{-1}|\Psi_{0}\rangle)\\
\nonumber  &=&-\mathcal{A}|\Phi\rangle-\mathcal{A}\mathcal{A}^{-1}|\Psi_{0}\rangle+2\mathcal{A}|\Phi\rangle\langle\Phi|\mathcal{A}^{-1}|\Psi_{0}\rangle\\
  \nonumber&=&-(|\Psi_{1}\rangle+|\Psi_{0}\rangle)-|\Psi_{0}\rangle+2
  (|\Psi_{1}\rangle+|\Psi_{0}\rangle)
[(\langle\Psi_{1}|\\
\nonumber &&+\langle\Psi_{0}|)
  |\Psi_{0}\rangle]\\
  \nonumber&=&-|\Psi_{1}\rangle-2|\Psi_{0}\rangle+2(1-a)
  |\Psi_{1}\rangle+2(1-a)|\Psi_{0}\rangle\\
&=&(1-2a)|\Psi_{1}\rangle-2a|\Psi_{0}\rangle,
\end{eqnarray}
where $a=\langle\Psi_{1}|\Psi_{1}\rangle$.

Next consider the action of the operator $\mathcal{Q}$ on the
vector $|\Psi_{0}\rangle $:
\begin{eqnarray}\label{21}
% \nonumber to remove numbering (before each equation)
\nonumber &&\mathcal{Q}|\Psi_{0}\rangle\\
\nonumber &=&-\mathcal{A}\mathcal{S}_{\Phi}\mathcal{A}^{-1}\mathcal{S}_{\chi}|\Psi_{0}\rangle \\
\nonumber &=&-\mathcal{A}\mathcal{S}_{\Phi}\mathcal{A}^{-1}|\Psi_{0}\rangle \\
\nonumber&=&-\mathcal{A}(I-2|\Phi\rangle\langle\Phi|)(|\Phi\rangle-\mathcal{A}^{-1}|\Psi_{1}\rangle)\\
\nonumber&=&\mathcal{A}|\Phi\rangle+\mathcal{A}\mathcal{A}^{-1}|\Psi_{1}\rangle-2\mathcal{A}|\Phi\rangle\langle\Phi|\mathcal{A}^{-1}|\Psi_{1}\rangle\\
         &=&2(1-a)|\Psi_{1}\rangle+(1-2a)|\Psi_{0}\rangle.
\end{eqnarray}
\end{proof}

Following the previous study \cite{Brassard2000Quantum}, the subspace $\mathcal{H}_{\Psi}$ has an orthonormal basis
consisting of two eigenvectors of $\mathcal{Q}$ according to Theorem \ref{The1}.

\begin{theorem} \label{The1}
For any initial quantum state $|\Phi\rangle$,
$|\Psi\rangle=\mathcal{A}|\Phi\rangle$ can be expressed as follows:
\begin{equation}\label{22}
|\Psi\rangle=\frac{-i}{\sqrt{2}}(e^{i\theta_{a}}|\Psi_{+}\rangle-
e^{-i\theta_{a}}|\Psi_{-}\rangle),
\end{equation}
where $|\Psi_{\pm}\rangle=\frac{1}{\sqrt{2}}(\frac{1}{\sqrt{a}}
|\Psi_{1}\rangle\pm\frac{i}{\sqrt{1-a}}|\Psi_{0}\rangle)$
represent two eigenvectors of $\mathcal{Q}$. The corresponding eigenvalues are
$\lambda_{\pm}=e^{\pm i2\theta_{a}}$, where $i=\sqrt{-1}$ denotes the principal square root of $-1$, and the angle $\theta_{a}$ is defined so that $sin^{2}(\theta_{a})=a$, and $0\leq\theta_{a}\leq\pi/2$.
\end{theorem}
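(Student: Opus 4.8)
The plan is to exploit Lemma~\ref{L3.1}, which shows that $\mathcal{Q}$ preserves the two-dimensional subspace $\mathcal{H}_{\Psi}=\mathrm{span}\{|\Psi_{1}\rangle,|\Psi_{0}\rangle\}$, and then to diagonalize its restriction there. First I would record that, since $|\Psi_{1}\rangle$ and $|\Psi_{0}\rangle$ are the good and bad components of the normalized state $|\Psi\rangle$ (distinguished by the orthogonal ancilla values $|1\rangle$ and $|0\rangle$), they are mutually orthogonal with $\langle\Psi_{1}|\Psi_{1}\rangle=a$ and $\langle\Psi_{0}|\Psi_{0}\rangle=1-a$. Reading off the coefficients in Lemma~\ref{L3.1}, the matrix of $\mathcal{Q}$ in the (unnormalized) basis $(|\Psi_{1}\rangle,|\Psi_{0}\rangle)$ is
\begin{equation}
M=\begin{pmatrix}1-2a & 2(1-a)\\ -2a & 1-2a\end{pmatrix}.
\end{equation}

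Next I would obtain the eigenvalues from the characteristic equation $(1-2a-\lambda)^{2}+4a(1-a)=0$, giving $\lambda=(1-2a)\mp 2i\sqrt{a(1-a)}$. Using $\sin^{2}\theta_{a}=a$ together with $1-2a=\cos(2\theta_{a})$ and $2\sqrt{a(1-a)}=\sin(2\theta_{a})$, these become $\lambda_{\pm}=\cos(2\theta_{a})\pm i\sin(2\theta_{a})=e^{\pm i2\theta_{a}}$. Solving $(M-\lambda_{\pm}I)v=0$ then yields the coefficient ratio $v_{0}/v_{1}=\pm i\sqrt{a}/\sqrt{1-a}$, which matches exactly the ratio of coefficients in the stated $|\Psi_{\pm}\rangle=\tfrac{1}{\sqrt{2}}(\tfrac{1}{\sqrt{a}}|\Psi_{1}\rangle\pm\tfrac{i}{\sqrt{1-a}}|\Psi_{0}\rangle)$; hence $|\Psi_{\pm}\rangle$ are eigenvectors of $\mathcal{Q}$ with eigenvalues $e^{\pm i2\theta_{a}}$. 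Orthonormality follows immediately from the orthogonality and norms above: the cross terms cancel, $\langle\Psi_{\pm}|\Psi_{\pm}\rangle=\tfrac{1}{2}(a/a+(1-a)/(1-a))=1$, and $\langle\Psi_{+}|\Psi_{-}\rangle=\tfrac{1}{2}(1-1)=0$.

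Finally I would invert the defining relations to write $|\Psi_{1}\rangle=\tfrac{\sqrt{a}}{\sqrt{2}}(|\Psi_{+}\rangle+|\Psi_{-}\rangle)$ and $|\Psi_{0}\rangle=\tfrac{-i\sqrt{1-a}}{\sqrt{2}}(|\Psi_{+}\rangle-|\Psi_{-}\rangle)$, and substitute into $|\Psi\rangle=|\Psi_{1}\rangle+|\Psi_{0}\rangle$. Collecting terms gives $|\Psi\rangle=\tfrac{1}{\sqrt{2}}[(\sqrt{a}-i\sqrt{1-a})|\Psi_{+}\rangle+(\sqrt{a}+i\sqrt{1-a})|\Psi_{-}\rangle]$, and the substitution $\sqrt{a}\mp i\sqrt{1-a}=\sin\theta_{a}\mp i\cos\theta_{a}=\mp i\,e^{\pm i\theta_{a}}$ collapses this to the claimed $|\Psi\rangle=\tfrac{-i}{\sqrt{2}}(e^{i\theta_{a}}|\Psi_{+}\rangle-e^{-i\theta_{a}}|\Psi_{-}\rangle)$.

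I expect the main difficulty to be bookkeeping rather than anything conceptual: correctly tracking the complex phases when passing from the algebraic eigenvector coefficients to the $e^{\pm i\theta_{a}}$ form, and keeping the branch $0\le\theta_{a}\le\pi/2$ consistent so that $\sin\theta_{a}=+\sqrt{a}$ and $\cos\theta_{a}=+\sqrt{1-a}$. The only genuinely nontrivial input is the orthogonality $\langle\Psi_{1}|\Psi_{0}\rangle=0$ with norms $a$ and $1-a$, which should be justified from the good/bad state structure of Eq.~(\ref{16}) rather than simply assumed.
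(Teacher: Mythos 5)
Your proposal is correct, and it reaches the theorem by a genuinely different route than the paper, even though both arguments rest entirely on Lemma~\ref{L3.1}. The paper works backwards from the stated answer: it first expands $|\Psi\rangle=|\Psi_{1}\rangle+|\Psi_{0}\rangle$ using $\sqrt{a}=\sin\theta_{a}$, $\sqrt{1-a}=\cos\theta_{a}$ and the exponential forms of sine and cosine, regroups terms to exhibit exactly the combination $\frac{-i}{\sqrt{2}}(e^{i\theta_{a}}|\Psi_{+}\rangle-e^{-i\theta_{a}}|\Psi_{-}\rangle)$, and then verifies by direct substitution of Lemma~\ref{L3.1} that $\mathcal{Q}|\Psi_{+}\rangle=(1-2a+2i\sqrt{a}\sqrt{1-a})|\Psi_{+}\rangle=e^{i2\theta_{a}}|\Psi_{+}\rangle$, asserting the $|\Psi_{-}\rangle$ case by symmetry. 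You instead derive the spectral data from scratch: you assemble the $2\times2$ matrix of $\mathcal{Q}$ on $\mathrm{span}\{|\Psi_{1}\rangle,|\Psi_{0}\rangle\}$, extract the eigenvalues $e^{\pm i2\theta_{a}}$ from the characteristic polynomial, solve for the eigenvector ratios, and only then recover the expansion of $|\Psi\rangle$ by inverting the definition of $|\Psi_{\pm}\rangle$ and substituting into $|\Psi\rangle=|\Psi_{1}\rangle+|\Psi_{0}\rangle$; I checked the sign bookkeeping ($\sqrt{a}\mp i\sqrt{1-a}=\mp i e^{\pm i\theta_{a}}$) and it is right. Your version buys two things: it explains where $|\Psi_{\pm}\rangle$ and $e^{\pm i2\theta_{a}}$ come from rather than verifying them as given, and it handles both eigenvectors at once (the paper's closing line, ``$\mathcal{Q}|\Psi_{0}\rangle=e^{-i2\theta_{a}}|\Psi_{1}\rangle$,'' is in fact a typo for $\mathcal{Q}|\Psi_{-}\rangle=e^{-i2\theta_{a}}|\Psi_{-}\rangle$). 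The paper's version is shorter once the answer is known, since it avoids the characteristic-polynomial and linear-system computations. You are also more careful than the paper in flagging that $\langle\Psi_{1}|\Psi_{0}\rangle=0$, $\langle\Psi_{1}|\Psi_{1}\rangle=a$, $\langle\Psi_{0}|\Psi_{0}\rangle=1-a$ need justification from the good/bad structure of Eq.~(\ref{16}); the paper uses these facts silently, both here and inside the proof of Lemma~\ref{L3.1}. The one caveat common to both arguments is the implicit assumption $0<a<1$, without which the restriction of $\mathcal{Q}$ to the two-dimensional subspace degenerates and $|\Psi_{\pm}\rangle$ are not even defined.
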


\begin{proof}
The angle $\theta_{a}$ is defined so that $sin^{2}(\theta_{a})=a$. At first, we prove that $|\Psi\rangle$ can be represented as the linear combination of $|\Psi_{+}\rangle$ and $|\Psi_{-}\rangle$:
\begin{eqnarray}
% \nonumber to remove numbering (before each equation)
\nonumber && |\Psi\rangle \\
\nonumber&=& |\Psi_{1}\rangle+|\Psi_{0}\rangle\\
\nonumber &=& \frac{2\sqrt{a}}{2\sqrt{a}}|\Psi_{1}\rangle+\frac{2\sqrt{1-a}}{2\sqrt{1-a}}|\Psi_{0}\rangle\\
\nonumber &=& \frac{-i}{2\sqrt{a}}(2isin\theta_{a})|\Psi_{1}\rangle+
\frac{1}{2\sqrt{1-a}}(2cos\theta_{a})|\Psi_{0}\rangle\\
\nonumber &=& \frac{-i}{2\sqrt{a}}(cos\theta_{a}+isin\theta_{a}-cos\theta_{a}+isin\theta_{a})
|\Psi_{1}\rangle\\
\nonumber&&+\frac{1}{2\sqrt{1-a}}(cos\theta_{a}+isin\theta_{a}+cos\theta_{a}-isin\theta_{a})|\Psi_{1}\rangle\\ \nonumber&=&\frac{-i}{\sqrt{2}}[(cos\theta_{a}+isin\theta_{a})(\frac{1}{\sqrt{2a}}|\Psi_{1}\rangle+
\frac{i}{\sqrt{2(1-a)}}|\Psi_{0}\rangle)\\ \nonumber&&-(cos\theta_{a}-isin\theta_{a})(\frac{1}{\sqrt{2a}}|\Psi_{1}\rangle-
\frac{i}{\sqrt{2(1-a)}}|\Psi_{0}\rangle)].
\end{eqnarray}
Let $|\Psi_{\pm}\rangle=\frac{1}{\sqrt{2}}(\frac{1}{\sqrt{a}}
|\Psi_{1}\rangle\pm\frac{i}{\sqrt{1-a}}|\Psi_{0}\rangle)$,
we have
\begin{equation}
|\Psi\rangle =\frac{-i}{\sqrt{2}}(e^{i\theta_{a}}|\Psi_{+}\rangle-
e^{-i\theta_{a}}|\Psi_{-}\rangle).
\end{equation}

Next, we prove that $|\Psi_{+}\rangle$ is the eigenvector of $\mathcal{Q}$. The corresponding eigenvalues are
$\lambda_{+}=e^{i2\theta_{a}}$.
\begin{eqnarray}\label{23}
% \nonumber to remove numbering (before each equation)
\nonumber &&\mathcal{Q}|\Psi_{+}\rangle\\
\nonumber&=&\mathcal{Q} (\frac{1}{\sqrt{2a}}|\Psi_{1}\rangle+
\frac{i}{\sqrt{2(1-a)}}|\Psi_{0}\rangle)\\
 \nonumber &=& \frac{1}{\sqrt{2a}}\mathcal{Q} |\Psi_{1}\rangle+\frac{i}{\sqrt{2(1-a)}}\mathcal{Q} |\Psi_{0}\rangle\\
 \nonumber
 %&\overset{Lemma3.1}{=}&
 &=&[\frac{1-2a}{\sqrt{2a}}+\frac{2i(1-a)}{\sqrt{2(1-a)}}] |\Psi_{1}\rangle+[\frac{-2a}{\sqrt{2a}}+\frac{i(1-2a)}{\sqrt{2(1-a)}}] |\Psi_{0}\rangle\\
 \nonumber \nonumber&=&(1-2a+2i\sqrt{a}\sqrt{1-a})(\frac{1}{\sqrt{2a}}|\Psi_{1}\rangle+\frac{i}{\sqrt{2(1-a)}}|\Psi_{0}\rangle)\\
 &=&e^{i2\theta_{a}}|\Psi_{+}\rangle,
\end{eqnarray}
where the third equal sign is based on Lemma \ref{L3.1}.

The same method can be used to prove that $\mathcal{Q}|\Psi_{0}\rangle=e^{-i2\theta_{a}}|\Psi_{1}\rangle$. The result is confirmed.
\end{proof}

After $j$ applications of operator $\mathcal{Q}$, the state is
\begin{equation}\label{24}
\mathcal{Q}^{j}|\Psi\rangle=\frac{-i}{\sqrt{2}}(e^{(2j+1)i\theta_{a}}|\Psi_{+}\rangle-
e^{-(2j+1)i\theta_{a}}|\Psi_{-}\rangle).
\end{equation}

We can use the same method as introduced in \cite{Brassard2000Quantum} to estimate $a$ (Algorithm 1 in Table \ref{tab1}). The resulting estimate of $a$ is
defined as $\tilde{a}$. The quantum circuit to implement
this algorithm is depicted in Fig. \ref{fig3}.
\begin{table}
\caption{The generalized quantum amplitude estimation}
\setlength{\tabcolsep}{0.1pt}
\begin{tabular}{p{240pt}}
\hline
$\mathbf{Algorithm~1:}$ The generalized quantum amplitude estimation\\
\hline
$\mathbf{1}$. Prepare the quantum state $|0\rangle^{\otimes h}|\Phi_{1}\rangle|0\rangle$.\\

$\mathbf{2}$. Apply Hadamard to the first register $|0\rangle^{\bigotimes h}$, where $h$ is an integer that relates to the precision.\\

$\mathbf{3}$. Apply $\mathcal{A}$ to the second register $|\Phi_{1}\rangle|0\rangle$.\\

$\mathbf{4}$. Apply $\mathcal{Q}^{2^{j}}$ to the second register controlled by the first register.\\

$\mathbf{5}$. Apply Fourier transformation in inverse to the first register.\\
$\mathbf{6}$. Measure the first register and denote the outcome $|y\rangle$.\\

$\mathbf{7}$. Output $\tilde{a}=sin^{2}(\pi\frac{y}{2^{h}})$.
 \\
 \hline
\end{tabular}
\label{tab1}
\end{table}

\begin{figure}
  \centering
  % Requires \usepackage{graphicx}
  \includegraphics[width=8.5cm]{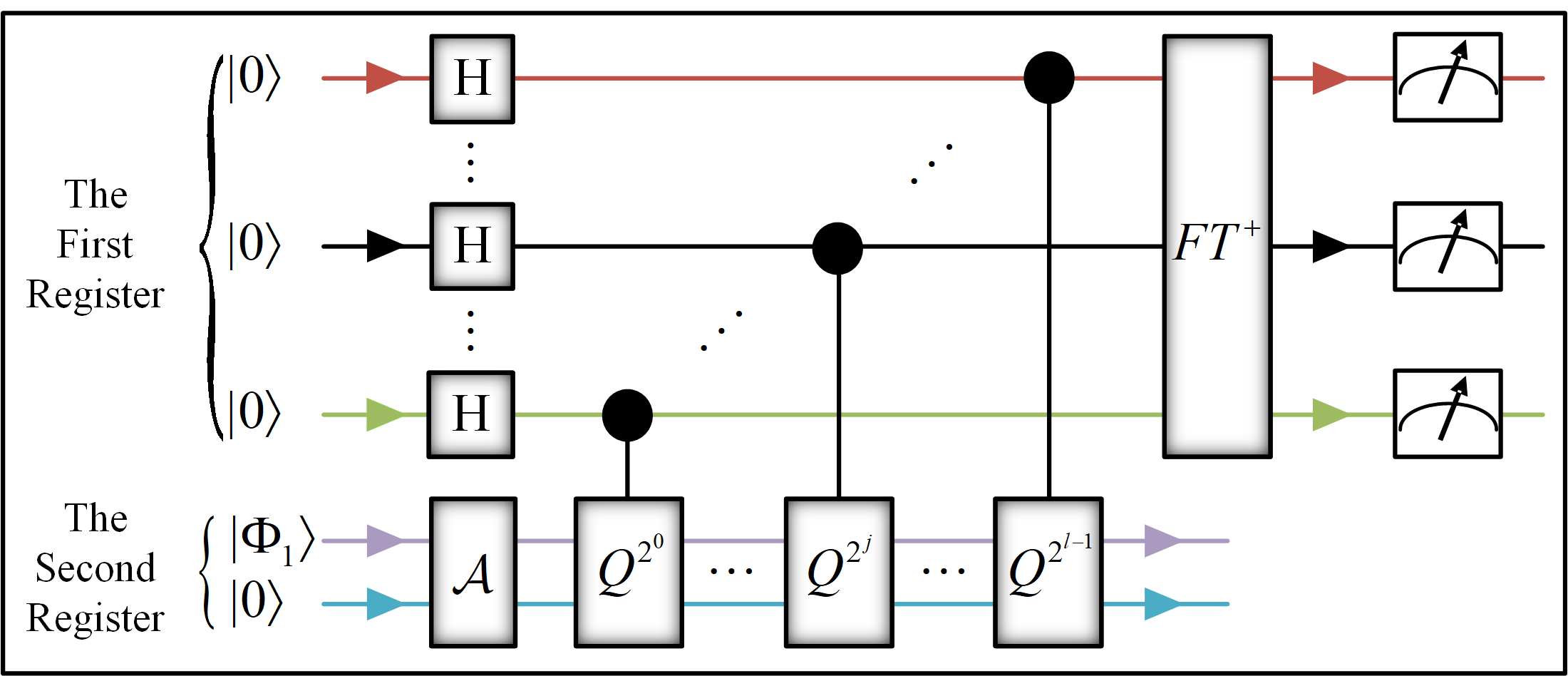} \\
  \caption{The quantum circuit of the generalized quantum amplitude estimation}
  \label{fig3}
\end{figure}

\section{The quantum support vector machine based on GQAE}\label{S4}
After introducing the tool we cast our gaze over the QSVM based on the GQAE (AE-QSVM), striving to address the problem arising in training and classification.
At first, the training progress of the QSVM is realized based on quantum singular value decomposition. Then a query state was classified using GQAE to calculate the inner product.

\subsection{Quantum support vector machine training}\label{S4.1}
In LS-QSVM, the input matrix $F$ is not expected to be sparse and well structured, thus restricting the applications of HHL algorithm.
An important method for solving linear equations is singular value decomposition.
This paper use the quantum singular value decomposition to train AE-QSVM \cite{Kerenidis2020Quantum}.

Singular value is equal to eigenvalue for the square matrix $F$.
Therefore, the singular value decomposition of matrix $F$ is written as $F=\sum_{i}\lambda_{i}\nu_{i}\nu_{i}^{\mathrm{ T }}$, where $\lambda_{i}\geq0$ are the eigenvalues and $\nu_{i}$ are the corresponding eigenvectors.
If the matrix  $F$ is singular, the
solution $F^{-1}|0, \boldsymbol y\rangle$ can be achieved for the Moore-Penrose pseudoinverse, that is, the nonzero eigenvalues are only inverted.

As in Algorithm $2$ (Table \ref{tab2}), the quantum linear systems $F|b, \boldsymbol \alpha\rangle=|0, \boldsymbol y\rangle$ is solved by using the singular value decomposition algorithm \cite{Kerenidis2020Quantum}. We obtain the parameters of AE-QSVM:
\begin{equation}\label{24.9}
|b,\boldsymbol\alpha\rangle=\sum_{i}\beta_{i}\frac{1}{\bar{\lambda}_{i}}|\nu_{i}\rangle.
\end{equation}

In the basis of training set labels, the expansion coefficients of the ultimate state are the desired support vector machine parameters:
\begin{equation}\label{25}
|b,\boldsymbol\alpha\rangle=\frac{1}{\sqrt{C}}(b|0\rangle+\sum_{i=1}^{M}\alpha_{k}|k\rangle),
\end{equation}
where $C=b^{2}+\sum_{k=1}^{M}\alpha_{k}^{2}$.

\begin{table}
\caption{Solve the quantum linear systems by using the improved singular
value decomposition algorithm}
\label{table}
\setlength{\tabcolsep}{3pt}
\begin{tabular}{p{240pt}}
\hline
$\mathbf{Algorithm~2:}$ Solve the quantum linear systems using the improved singular
value decomposition algorithm \cite{Kerenidis2020Quantum} \\
\hline
Require: Matrix $F$ stored in the data structure, such that eigenvalues of $F$ lie in $[1/\kappa,1]$. Input state $|0,y\rangle
=\sum_{i}\beta_{i}|\nu_{i}\rangle$
%, where $|\nu_{i}\rangle$ are right singular vectors for $F$
.\\

$\mathbf{1}$. Perform singular value decomposition with precision $\varepsilon_{1}$ for $F$ on
$|0,y\rangle$ to obtain $\sum_{i}\beta_{i}|\nu_{i}\rangle|\bar{\lambda}_{i}\rangle$
%, where $\lambda_{i}$ are the (possibly 0) singularvalues of $F$
.
\\

$\mathbf{2}$. Perform a conditional rotation and uncompute the SVE
register to obtain the state:$\sum_{i}\beta_{i}|\nu_{i}\rangle
(\frac{1}{\bar{\lambda}_{i}}|0\rangle+\gamma|1\rangle)$.\\

$\mathbf{3}$. Perform the amplitude amplification to obtain
$|b,\alpha\rangle=\sum_{i}\beta_{i}\frac{1}{\bar{\lambda}_{i}}|\nu_{i}\rangle$.
 \\
 \hline
\end{tabular}
\label{tab2}
\end{table}

\subsection{Classification based on generalized quantum amplitude estimation}
We have implement the training procedure of quantum support vector machine and would like to classify a query state $|\boldsymbol x\rangle$.

Following Ref. \cite{Rebentrost2014Quantum},
the training data oracle is constructed as follows by adding a register to the state $|b,\boldsymbol\alpha\rangle$:
\begin{equation}\label{26}
  |\tilde{\boldsymbol\mu}\rangle=\frac{1}{\sqrt{N_{\tilde{\boldsymbol\mu}}}}(
  b|0\rangle|0\rangle+\sum_{k=1}^{M}\alpha_{k}|\boldsymbol x_{k}||k\rangle|\boldsymbol x_{k}\rangle),
\end{equation}
with $N_{\tilde{\mu}}=b^{2}+\sum_{k=1}^{M}\alpha_{k}^{2}|\boldsymbol x_{k}|^{2}$. The query state is constructed:
\begin{equation}\label{27}
  |\tilde{\boldsymbol x}\rangle=\frac{1}{\sqrt{N_{\tilde{\boldsymbol z}}}}(|0\rangle|0\rangle+\sum_{k=1}^{M}|\boldsymbol x||k\rangle|\boldsymbol x\rangle),
\end{equation}
with $N_{\tilde{\boldsymbol z}}=M|\boldsymbol x|^{2}$.

For the classification, we perform a quantum inner estimation for $|\tilde{\boldsymbol\mu}\rangle$ and $|\tilde{\boldsymbol x}\rangle$ based on GQAE for the following reason:
\begin{equation}\label{28}
\langle\tilde{\boldsymbol\mu}|\tilde{\boldsymbol x}\rangle=\frac{1}{\sqrt{N_{\tilde{\boldsymbol\mu}}}\sqrt{N_{\tilde{\boldsymbol z}}}}(b+\sum_{k=1}^{M}\alpha_{k}|\boldsymbol x_{k}||\boldsymbol x|\langle\boldsymbol x_{k}|\boldsymbol x\rangle).
\end{equation}

Following the previous studies \cite{Lin2020Quantum,Li2019Quantum,Hou2020Quantum}, two ancillary qubits can be used to construct entangled states
\begin{equation}\label{29}
|\phi_{0}\rangle=\frac{1}{\sqrt{2}}(|0\rangle|\tilde{\boldsymbol\mu}\rangle-|1\rangle|\tilde{\boldsymbol x}\rangle)|0\rangle.
\end{equation}

Firstly, a Hadamard gate is acted on the first register (denoted as $H_{1}$) to obtain the following state:
\begin{eqnarray}\label{30}
% \nonumber to remove numbering (before each equation)
 \nonumber &&|\phi_{1}\rangle\\
 \nonumber &=& H_{1}|\phi_{0}\rangle\\
\nonumber&=& \frac{1}{\sqrt{2}}(\frac{|0\rangle+|1\rangle}{\sqrt{2}}|\tilde{\boldsymbol\mu}\rangle-
   \frac{|0\rangle-|1\rangle}{\sqrt{2}}|\tilde{\boldsymbol x}\rangle)|0\rangle \\
&=& \frac{1}{2}(|0\rangle|\tilde{\boldsymbol\mu}\rangle|0\rangle+|1\rangle|\tilde{\boldsymbol\mu}\rangle|0\rangle
   -|0\rangle|\tilde{\boldsymbol x}\rangle|0\rangle+|1\rangle|\tilde{\boldsymbol x}\rangle|0\rangle).
\end{eqnarray}

Then, the Pauli-$X$ gate is applied to the third register controlled by the first register with $|0\rangle$ (denoted as $C\-/X$). That is,  $|0\rangle|i\rangle|j\rangle\rightarrow|0\rangle|i\rangle|j\oplus1\rangle$ and $|1\rangle|i\rangle|j\rangle\rightarrow|1\rangle|i\rangle|j\rangle$, where $\oplus$ is modulo $2$ addition.
\begin{eqnarray}\label{31}
% \nonumber to remove numbering (before each equation)
\nonumber &&|\phi_{2}\rangle\\
\nonumber
&=&(C\-/X)|\phi_{1}\rangle\\
\nonumber
   &=&\frac{1}{2}(|0\rangle|\tilde{\boldsymbol\mu}\rangle|1\rangle+|1\rangle|\tilde{\boldsymbol\mu}\rangle|0\rangle
   -|0\rangle|\tilde{\boldsymbol x}\rangle|1\rangle+|1\rangle|\tilde{\boldsymbol x}\rangle|0\rangle)\\
 &=& \frac{1}{2}(|0\rangle|\tilde{\boldsymbol\mu}\rangle-|0\rangle|\tilde{\boldsymbol x}\rangle)|1\rangle+
       \frac{1}{2}(|1\rangle|\tilde{\boldsymbol\mu}\rangle+|1\rangle|\tilde{\boldsymbol x}\rangle)|0\rangle.
\end{eqnarray}

Let $\mathcal{A}_{1}=(H_{1})(C\-/X)$ ,$\psi_{1}=\frac{1}{2}(|0\rangle|\tilde{\boldsymbol\mu}\rangle-|0\rangle|\tilde{\boldsymbol x}\rangle)|1\rangle$, $\psi_{0}=\frac{1}{2}(|1\rangle|\tilde{\boldsymbol\mu}\rangle+|1\rangle|\tilde{\boldsymbol x}\rangle)|0\rangle$.
The problem of interest is given by an operator $\mathcal{A}_{1}$ acting on $|\phi_{0}\rangle$ such that
\begin{equation}\label{32}
\mathcal{A}_{1} : |\phi_{0}\rangle\longrightarrow |\psi_{1}\rangle+|\psi_{0}\rangle.
\end{equation}

Then we can use GQAE to calculate $\langle\psi_{1}|\psi_{1}\rangle$.
Therefore, the inner product of $|\tilde{\boldsymbol\mu}\rangle$ and $|\tilde{\boldsymbol x}\rangle$ can be obtained for the following equation:
\begin{equation}\label{33}
\langle\psi_{1}|\psi_{1}\rangle=\frac{1}{2}(1-\langle \tilde{\boldsymbol\mu}|\tilde{\boldsymbol x}\rangle).
\end{equation}
When we estimate that the inner product of $\langle\psi_{1}|\psi_{1}\rangle$ is greater than $\frac{1}{2}$ through GQAE, $|\boldsymbol x\rangle$ belongs to the $+1$; otherwise, $-1$.

Obviously, AE-QSVM is executed one time to obtain the classification result with an accuracy $\varepsilon$ when using GQAE to classify new samples, where $\varepsilon$ depends on the number of auxiliary qubits in the GQAE algorithm.
The following will compare the algorithms LS-QSVM and AE-QSVM in terms of qubit consumption and complexity.

\subsection{Space complexity analysis of AE-QSVM }
The space complexity of AE-QSVM is analyzed in this subsection. That is we want to analyzed the qubit consumption.
For AE-QSVM, the qubit consumptions in training process is $3+\lceil log(m+1)\rceil+k+\lceil log(2+\frac{1}{2\varepsilon})\rceil+1$, where $k$ represents that the eigenvalues of $F$ are approximated to an accuracy of $2^{-k}$ \cite{Nielsen2000Quantum, Kerenidis2020Quantum}.

Estimating the inner product in AE-QSVM, $h$ ancillary qubit is needed to have an accuracy of $\varepsilon$ \cite{Brassard2000Quantum}.
\begin{equation}\label{33.2}
\varepsilon=\frac{2\pi\sqrt{a(1-a)}}{H}+\frac{\pi^{2}}{H^{2}},
\end{equation}
where $H=2^{h}$.
Solving Eq. (\ref{33.2}) and considering $H>0$, we have
\begin{equation}\label{33.3}
H=\frac{\pi(\sqrt{a(1-a)}+\sqrt{a(1-a)+\varepsilon})}{\varepsilon}.
\end{equation}
Therefore,
\begin{equation}\label{33.3}
h\leq\lceil log\frac{\pi+\sqrt{3}\pi}{\varepsilon}\rceil.
\end{equation}
Consequently, the qubit consumption of AE-QSVM is
\begin{equation}\label{33.4}
3+\lceil log(m+1)\rceil+\lceil log\varepsilon\rceil+\lceil log(2+\frac{1}{2\varepsilon})\rceil+1+\lceil log\frac{\pi+\sqrt{3}\pi}{\varepsilon}\rceil.
\end{equation}

Besides, for LS-QSVM, the qubit consumption in training process is also $3+\lceil log(m+1)\rceil+k+\lceil log(2+\frac{1}{2\varepsilon})\rceil+1$. We set the accuracy all to $\varepsilon$ in this paper; thus the qubit consumption of LS-QSVM is \begin{equation}\label{33.1}
T\times(3+\lceil log(m+1)\rceil+\lceil log\varepsilon\rceil+\lceil log(2+\frac{1}{2\varepsilon})\rceil+1),
\end{equation}
where $T$ represents the number of the iterations.

Let $Q=3+\lceil log(m+1)\rceil+\lceil log\varepsilon\rceil+\lceil log(2+\frac{1}{2\varepsilon})\rceil+1$, the space complexity of LS-QSVM and AE-QSVM respectively are $Q+\lceil log\frac{\pi+\sqrt{3}\pi}{\varepsilon}\rceil$  and $TQ$. That is to say, LS-QSVM have a higher space complexity when $T>1+\frac{\lceil log\frac{\pi+\sqrt{3}\pi}{\varepsilon}\rceil}{Q}$ compared with AE-QSVM.

\subsection{Time complexity analysis of AE-QSVM }

We now analyze the time complexity for building AE-QSVM.
In the stage of the training process, the time is dominated by the quantum singular value decomposition \cite{Kerenidis2020Quantum}. We set $\kappa$ as the condition number (the largest eigenvalue divided by the smallest
eigenvalue), and only normalized eigenvalues $\lambda_{j}$ in the interval $1/\kappa\leq|\lambda_{j}|\leq1$ are taken into account. The kernel matrix $F^{0}$
with time $O(log(mn))$ \cite{Rebentrost2014Quantum} is prepared.
The running time of compute $e^{iF^{0}t_{0}}$ is $O(t_{0}^{2}\varepsilon^{-1})$, where $t_{0}$ is evolution time and $\varepsilon$ is the accuracy.
Therefore, the phase estimation costs time $O(t^{2}_{0}\varepsilon^{-1}log(mn))$ which presents the major time complexity. The probability of
getting $\lambda_{j}^{-1}$ determines iteration times of quantum singular value decomposition, and $O(\kappa^{2})$
iterations are needed to ensure the high success probability. However, only $O(\kappa)$ repetitions
are performed to obtain the same success probability by amplitude amplification; thus the time
complexity is $O(\kappa t_{0}^{2}\varepsilon^{-1}log(mn))$. For $t_{0}=O(\kappa\varepsilon^{-1})$, the complexity can be written as
$O(\kappa^{3}\varepsilon^{-3}log(mn))$.

In the stage of classifying a new sample, the complexity is dominated by quantum amplitude estimation. The complexity of performing unitary operator $\mathcal{Q}$ is $O(1)$ \cite{Nielsen2000Quantum}; thus the complexity of classification is $O(\kappa^{3}\varepsilon^{-3})$.

Putting all the time together, the time complexity of AE-QSVM is $O(\kappa^{3}\varepsilon^{-3}(log(mn)+1))$.

\section{Simulation}\label{S5}
The main work of this paper is to solve the restriction problem of parameter matrix and the large resource consumption caused by quantum collapse when implementing the Swap Test. Therefore, our experiment is divided into two parts.
\subsection{The simulation about the parameter matrix}
In this subsection, we give an example to illustrate the problem of the matrix consisted of an input sample.
In the LS-QSVM, the parameter matrix $F$ is a positive definite matrix or a positive semidefinite matrix.
When $F$ is a positive semidefinite matrix, the inverse does not exist. To illustrate the validity of the algorithm used in subsection \ref{S4.1}, consider the following matrix:
\begin{equation}\label{34}
F_{1}=
\begin{pmatrix}5&-1&3\\ -1&5&-3\\3&-3&3\end{pmatrix}.
\end{equation}

The eigenvalues of the matrix $F_{1}$ is $\lambda=0, 4, 9$,  and their corresponding eigenvectors is

\begin{eqnarray}
% \nonumber to remove numbering (before each equation)
\nonumber \boldsymbol\nu_{1}&=&\begin{pmatrix} 881/2158&  -881/2158&-881/1079\end{pmatrix}^{\mathrm{ T }},\\
\nonumber\boldsymbol\nu_{2}&=&\begin{pmatrix}  985/1393&   985/1393&0\end{pmatrix}^{\mathrm{ T }},\\
\nonumber\boldsymbol\nu_{3}&=&\begin{pmatrix} -780/1351&  780/1351&-780/1351\end{pmatrix}^{\mathrm{ T }}.
\end{eqnarray}

The matrix $F_{1}$ is singular since the first eigenvalue $\lambda=0$, i.e., the eigenvalue decomposition can not be used to solve the quantum linear systems $F_{1}x=b$. According to the analysis in subsection \ref{S4.1}, we invert only the nonzero eigenvalues when the matrix $F_{1}$ is singular.

\subsection{The simulation about the quantum resource consumption}
Experiments are performed for the complete algorithm in this subsection. In theory, the high complexity in terms of space and time of LS-QSVM are caused by the Swap Test.
Therefore, we first conduct experiments on the IBM platform for Swap Test. Then LS-QSVM and AE-QSVM are compared in both space and time.

At first, we conduct the experiment on the IBM qiskit platform.
The quantum circuit of Swap Test for classification is shown in Fig. \ref{fig1}.
%\begin{figure}
%  \centering
%  % Requires \usepackage{graphicx}
%  \includegraphics[width=10cm]{swap.png}\\
%  \caption{The quantum circuit of classification by performing measurement}
%  \label{F1}
%\end{figure}
The probability of getting $|1\rangle$ is $P$ when we measure the ancillary qubit.
$P$ can be obtained to an accuracy $\varepsilon$ by iterating $O(P(1-P)/\varepsilon^{2})$ as shown in Ref. \cite{Rebentrost2014Quantum}.
The number of iterations is identical for probability $P$ and $1-P$ when achieving the same accuracy because of $0<P<1$.
Therefore, we let $P=0.1$, $P=0.3$, and $P=0.5$.
It can be seen from Fig. \ref{fig4} that the number of iterations increases as the accuracy increases. Specifically,
the number of iterations will sharply increase as the error increases when $0<\varepsilon<0.1$.
\begin{figure}
  \centering
  % Requires \usepackage{graphicx}
  \includegraphics[width=8.5cm]{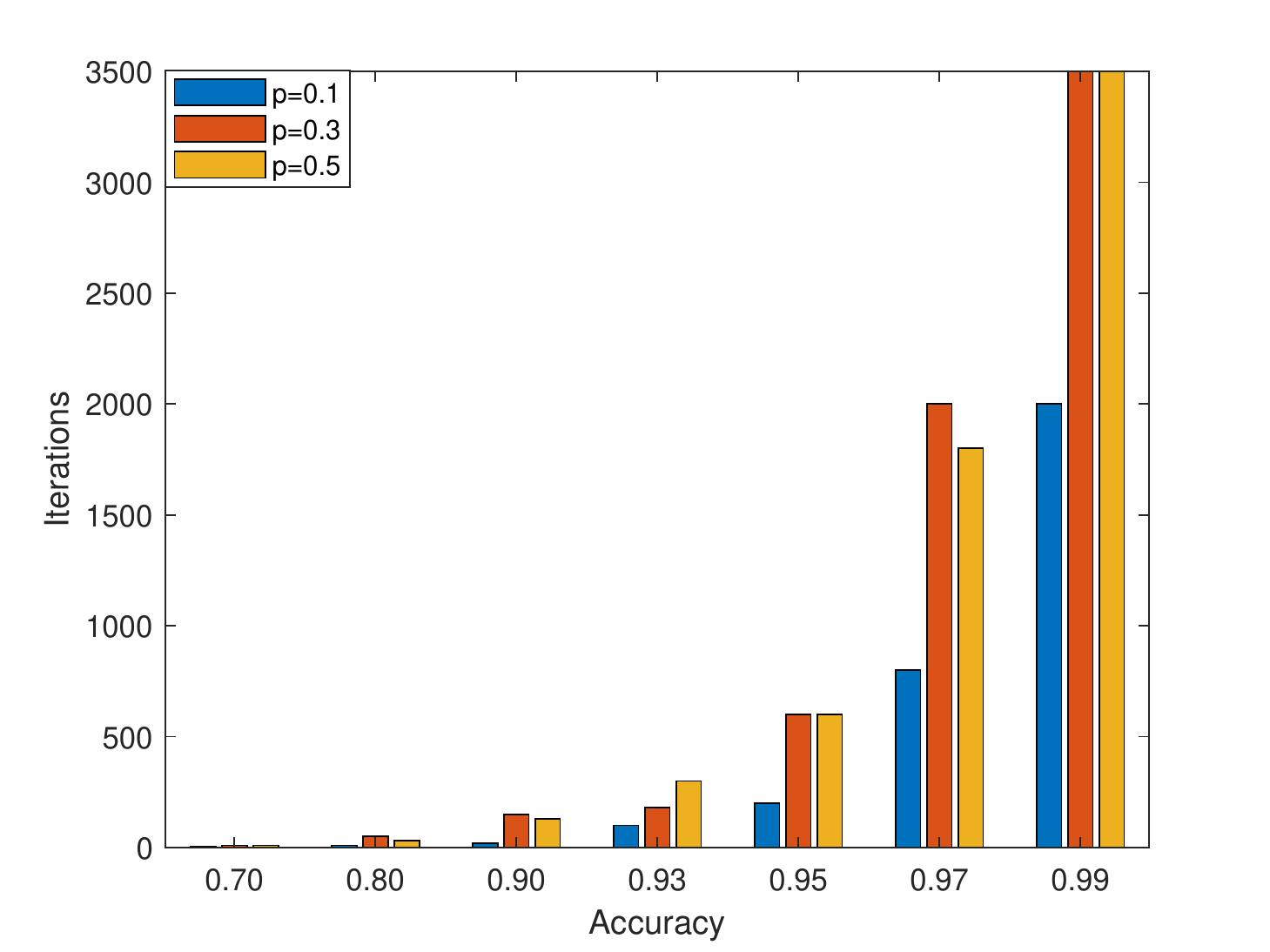}\\
  \caption{The number of iterations along error for Swap Test}
  \label{fig4}
\end{figure}

Then, the qubit consumption is presented based on the above result.  Fig. \ref{fig4} shows that a large number of iterations are required when using LS-QSVM for classification.
Besides, the quantum state of each iteration cannot be used for the next iteration.
Therefore, a large number of qubits are needed to be consumed.
AE-QSVM does not need iteration, and the number of qubits consumption is shown in Eq. (\ref{33.4}).
It can be seen from Fig. \ref{fig4} that the number of iterations required is different for the different $P$.  In this section, we take the average of three iterations for the same accuracy.
Table \ref{tab3} shows the number of qubits consumed by AE-QSVM  and LS-QSVM for different sample sizes ($m$).
It can be seen from Table \ref{tab3} that the number of qubits to be consumed by both algorithms increases as $m$ increases.
However, there are more qubits consumed by algorithm LS-QSVM  than that of algorithm AE-QSVM.
Especially, as the accuracy increases, the total amount of qubits consumed by algorithm LS-QSVM rapidly increases, while the total number of qubits consumed by algorithm AE-QSVM remains unchanged.
\begin{table*}
\centering
\caption{LS-QSVM compared with AE-QSVM about the number of qubit consumption}
\setlength{\tabcolsep}{5pt}
\begin{tabular}{|c|c|
c|c|c|c|c|c|c|}
\hline
\diagbox{Algorithm}{Qubits}{Accuracy}  &   &0.70    &0.80  &0.90    &0.93  &0.95 &0.97  &0.99\\
\hline
~~~~~~AE-QSVM &\multirow{2}* {$m$=10}&14& 14 & 14  & 14 &14   & 14   &14 \\
~~~~~~LS-QSVM &  ~                 &84& 300& 1000&1934&4667 &15334 &30000 \\
\hline
~~~~~~AE-QSVM &\multirow{2}* {$m$=100}&17 & 17 & 17  & 17 &17   & 17   &17 \\
~~~~~~LS-QSVM &  ~                  &109& 390& 1300&2513&6067 &19933 &39000 \\
\hline
~~~~~~AE-QSVM &\multirow{2}* {$m$=1000}&20 & 20 & 20  & 20 &20   & 20   &20 \\
~~~~~~LS-QSVM &  ~                   &134& 480& 1600&3034&7467 &34534 &48000 \\
\hline
~~~~~~AE-QSVM & \multirow{2}* {$m$=10000}&24 & 24 & 24  & 24 &24   & 24   &24 \\
~~~~~~LS-QSVM &  ~                     &167& 600& 2000&3867&9334 &30667 &60000 \\
\hline
\end{tabular}
\label{tab3}
\end{table*}

Finally, the complexity of AE-QSVM and LS-QSVM are compared.
It can be seen from the above experiment that extensive iterations are needed to achieve a high degree of accuracy for LS-QSVM.
Although the complexity of inner estimation is higher than that of the Swap Test, the measurement is not needed in AE-QSVM.
Therefore, we will compare the complexity of the two algorithms.
In order to compare LS-QSVM and AE-QSVM, we use expectation in mathematics to describe the complexity with an accuracy $\varepsilon$, i.e.,:
\begin{eqnarray}
% to remove numbering (before each equation)
\nonumber &&O(\int_{0}^{1}(\kappa^{3}\varepsilon^{-3}log(mn)+log(n))\frac{P(1-P)}
{\varepsilon^{2}}PdP) \\
&=&O(\frac{\kappa^{3}\varepsilon^{-3}log(mn)+log(n)}{12\varepsilon^{2}}).
\end{eqnarray}

The complexity of AE-QSVM is
\begin{equation}\label{38}
O(\kappa^{3}\varepsilon^{-3}(log(mn)+1))
\end{equation}
as analyzed in the above section.

Experiments are designed based on the six datasets: \emph{Tic~tac~toe} ($m=958$, $n=9$), \emph{Haberman} ($m=306$, $n=3$), \emph{Ionosphere} ($m=351,n=34$), \emph{Heart statlog} ($m=270$, $n=13$), \emph{Liver
disorders} ($m=345$, $n=6$), and \emph{Bupa} ($m=345,n=6$).
We first calculated the condition numbers of the six datasets and then calculated the complexity of the two methods according to Eq. (\ref{13}) and Eq. (\ref{14}) respectively.
It can be seen from Table \ref{tab4} that the complexity of AE-QLSM is significantly lower than that of LS-QSVM  when the accuracy is above $0.2$, and the complexity of the two algorithms differs little when the accuracy is $0.3$.
However, the algorithm is meaningless if the error is too large.
Therefore, our algorithm is superior than the LS-QSVM algorithm when the error is in the acceptable range.

\begin{sidewaystable}
\centering
\caption{LS-QSVM compared with AE-QSVM on six datasets}
\label{tab4}
\setlength{\tabcolsep}{5pt}
\begin{tabular}{|c|c|
c|c|c|c|c|c|c|}
\hline
\diagbox{Algorithm}{Complexity}{ Accuracy}  & Dataset   &0.99    &0.97  &0.95    &0.93  &0.90 &0.80  &0.70\\
\hline
AE-QSVM    &\multirow{2}* {\emph{Tic~tac~toe}} &    $1.32\times10^{24}$  &      $4.89\times10^{22}$ & $1.06\times10^{22}$  &   $3.85\times10^{21}$   &   $1.32\times10^{21}$    &   $1.65\times10^{20}$     & $4.89\times10^{19}$ \\
LS-QSVM     &  &   $9.91\times10^{28}$  &  $1.36\times10^{26}$ & $6.34\times10^{24}$ &  $8.43\times10^{23}$  &     $9.91\times10^{22}$
&   $1.55\times10^{21}$     & $1.36\times10^{20}$ \\
\hline
AE-QSVM     &\multirow{2}* {\emph{Haberman}} &       $3.65\times10^{27}$    &      $1.35\times10^{26}$& $2.92\times10^{25}$ &   $1.06\times10^{25}$   &   $3.65\times10^{24}$  &   $4.56\times10^{23}$   &   $1.35\times10^{23}$    \\
LS-QSVM    &    &      $2.67\times10^{32}$    &      $3.66\times10^{29}$ & $1.71\times10^{28}$  &   $2.67\times10^{26}$   &   $6.51\times10^{28}$ &   $4.17\times10^{24}$   &   $3.66\times10^{23}$     \\
\hline
AE-QSVM    &\multirow{2}* {\emph{Ionosphere}}  &     $1.46\times10^{20}$   &      $5.40\times10^{18}$ & $1.17\times10^{18}$ &   $4.25\times10^{17}$  &   $1.46\times10^{17}$ &   $1.82\times10^{16}$  &   $5.40\times10^{15}$  \\
LS-QSVM    &     &      $1.10\times10^{25}$    &      $1.51\times10^{22}$ & $7.03\times10^{20}$ &   $9.34\times10^{19}$  &   $1.10\times10^{19}$  &   $1.72\times10^{17}$  &   $1.51\times10^{16}$    \\
\hline
AE-QSVM      & \multirow{2}* {\emph{Heart statlog}}&     $6.39\times10^{31}$    &      $2.37\times10^{30}$ & $5.11\times10^{29}$  &   $1.86\times10^{29}$  &   $6.38\times10^{28}$    &   $7.97\times10^{27}$  &   $2.36\times10^{27}$  \\
LS-QSVM      &   &      $4.75\times10^{36}$    &      $6.51\times10^{33}$ & $3.03\times10^{32}$ &   $4.29\times10^{31}$  &   $4.74\times10^{30}$ &   $7.41\times10^{28}$  &   $6.50\times10^{27}$   \\
\hline
AE-QSVM     & \multirow{2}* {\emph{Liver
disorders}} &     $1.45\times10^{32}$    &      $5.39\times10^{30}$ & $1.16\times10^{30}$ &   $4.24\times10^{29}$  &   $1.45\times10^{29}$  &   $1.82\times10^{28}$  &   $5.39\times10^{27}$  \\
LS-QSVM    &     &      $1.07\times10^{37}$     &      $1.47\times10^{34}$& $6.88\times10^{32}$ &   $9.13\times10^{31}$   &   $1.07\times10^{31}$     &   $1.68\times10^{29}$  &   $1.47\times10^{28}$ \\
\hline
AE-QSVM      & \multirow{2}* {\emph{Bupa}} & $2.14\times10^{29}$   &      $7.92\times10^{27}$  & $1.71\times10^{27}$ & $6.23\times10^{26}$  &   $2.14\times10^{26}$   & $2.67\times10^{25}$  &   $7.92\times10^{24}$   \\
LS-QSVM        &  &      $1.58\times10^{34}$    &      $2.17\times10^{31}$ & $1.01\times10^{30}$ &   $1.34\times10^{29}$  &   $1.58\times10^{28}$    & $2.47\times10^{26}$  &   $2.17\times10^{25}$     \\
\hline
\end{tabular}
\end{sidewaystable}

\section{Discussion and conclusion}\label{S6}
The quantum support vector machine based on quantum amplitude estimation is introduced in this paper.
At first, the quantum singular value decomposition was used to train AE-QSVM which excludes the constraints of sparsity and is well structured for input matrix.
Then, we used the quantum amplitude estimation instead of Swap Test to realize the classification so that the support vector machine was not needed to be repetitively performed.
In AE-QSVM, auxiliary qubits are used to transform the probability amplitude into basis state; thus high accuracy can be achieved by adding the number of auxiliary qubits. Therefore, the algorithm AE-QSVM essentially reflects the superiority of quantum algorithms.

The results from this study can extended towards the following directions in future work:

 (1) AE-QSVM can be extended to the nonlinear classifier. One of the most powerful uses of support vector machines is to perform the nonlinear classification \cite{Cortes1995Suppor}.
One can use our proposed algorithm RN-QSVM to find a nonlinear classifier based on the Gaussian kernel \cite{Smola2000Generalized}:

\begin{equation}\label{12.1}
  (K(A,B))_{ij}=e^{-\delta\|A_{i}^{'}-B_{\cdot j}\|^{2}},~~~~~~~~~ i=1,\ldots, m, j=1,\ldots k
\end{equation}
where $A\in R^{m\times n}$ and $B\in R^{n\times l}$, $\delta$ is a positive constant, and the kernel $K(A,B)$ maps $R^{m\times n} \times R^{n\times l}$ into $R^{m\times l}$.
A quantum version of the Gaussian kernel was proposed by Bishwas $et~al.$ \cite{Bishwas2020Gaussian}.
The running time complexity of the quantum Gaussian kernel is significantly shorter compared with its classical version.

(2) The inner estimation based on amplitude estimation has low complexity compared with the Swap Test. Therefore, this generalized amplitude estimation can be used in other machine learning methods, for instance, neural networks and clustering.

\ifCLASSOPTIONcompsoc
  % The Computer Society usually uses the plural form
  \section*{Acknowledgments}
\else
  % regular IEEE prefers the singular form
  \section*{Acknowledgment}
\fi

This work was supported by the National Key Research and Development Program of China under Grant 2020YFB2103800 and the National Natural Science Foundation of China under Grants 61502016, 61672092.

% Can use something like this to put references on a page
% by themselves when using endfloat and the captionsoff option.
\ifCLASSOPTIONcaptionsoff
  \newpage
\fi

\bibliography{mybibfile}
\bibliographystyle{ieeetr}
% trigger a \newpage just before the given reference
% number - used to balance the columns on the last page
% adjust value as needed - may need to be readjusted if
% the document is modified later
%\IEEEtriggeratref{8}
% The "triggered" command can be changed if desired:
%\IEEEtriggercmd{\enlargethispage{-5in}}

% references section

% can use a bibliography generated by BibTeX as a .bbl file
% BibTeX documentation can be easily obtained at:
% http://mirror.ctan.org/biblio/bibtex/contrib/doc/
% The IEEEtran BibTeX style support page is at:
% http://www.michaelshell.org/tex/ieeetran/bibtex/
%\bibliographystyle{IEEEtran}
% argument is your BibTeX string definitions and bibliography database(s)
%\bibliography{IEEEabrv,../bib/paper}
%
% <OR> manually copy in the resultant .bbl file
% set second argument of \begin to the number of references
% (used to reserve space for the reference number labels box)
%\begin{thebibliography}{1}
%
%\bibitem{IEEEhowto:kopka}
%H.~Kopka and P.~W. Daly, \emph{A Guide to \LaTeX}, 3rd~ed.\hskip 1em plus
%  0.5em minus 0.4em\relax Harlow, England: Addison-Wesley, 1999.
%
%\end{thebibliography}

% biography section
%
% If you have an EPS/PDF photo (graphicx package needed) extra braces are
% needed around the contents of the optional argument to biography to prevent
% the LaTeX parser from getting confused when it sees the complicated
% \includegraphics command within an optional argument. (You could create
% your own custom macro containing the \includegraphics command to make things
% simpler here.)

% or if you just want to reserve a space for a photo:

\begin{IEEEbiography}[{\includegraphics[width=1in,height=1.25in,clip,keepaspectratio]{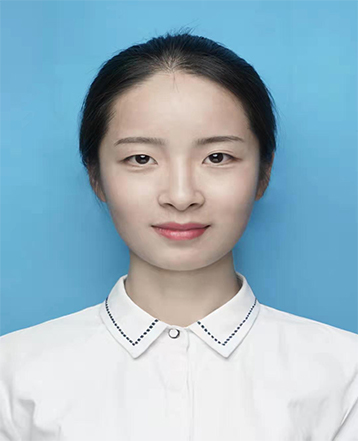}}]{Rui Zhang}
was born in Minquan County, Henan Province, China in 1994. She received the M.S. degree in School of Mathematics and Statistics, Henan University, China, in 2020. She is currently pursuing the Ph.D. degree in the School of Computer and Information Technology, Beijing Jiaotong University, China. Her research interest includes quantum computing and machine learning.
\end{IEEEbiography}

% if you will not have a photo at all:
\begin{IEEEbiography}[{\includegraphics[width=1in,height=1.25in,clip,keepaspectratio]{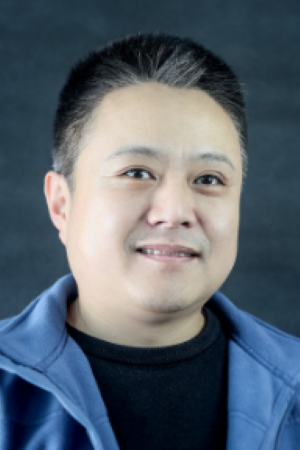}}]{Jian Wang}  received the Ph.D. degree in computer science from Beijing University of Posts and Telecommunications, Beijing, China, in 2008. Since 2008, he has been an associate professor with School of Computer and Information Technology, Beijing, China. His research interest includes quantum computing and quantum machine learning, data privacy.
\end{IEEEbiography}

% insert where needed to balance the two columns on the last page with
% biographies
%\newpage

\begin{IEEEbiography}[{\includegraphics[width=1in,height=1.25in,clip,keepaspectratio]{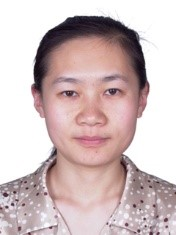}}]{Nan Jiang} received the Ph.D. degree in computer science from the Beijing University of Posts and Telecommunications, China, in 2006. From 2015 to 2016, she was a Visiting Scholar with the College of Science, Purdue University, USA. She is currently an Professor of computer science with the Beijing University of Technology, and the Beijing Key Laboratory of Trusted Computing. Her research interests include quantum image processing, quantum machine learning and information hiding.
\end{IEEEbiography}

\begin{IEEEbiography}[{\includegraphics[width=1in,height=1.25in,clip,keepaspectratio]{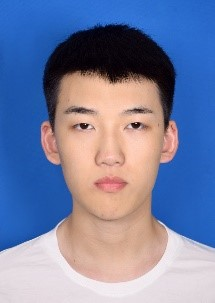}}]{Zichen Wang} was born in Xuzhou of Jiangsu Province. He graduated from Xuzhou University of Technology. Now studying in Beijing University of technology for a master's degree. The current research direction is quantum machine learning.
\end{IEEEbiography}

% You can push biographies down or up by placing
% a \vfill before or after them. The appropriate
% use of \vfill depends on what kind of text is
% on the last page and whether or not the columns
% are being equalized.

%\vfill

% Can be used to pull up biographies so that the bottom of the last one
% is flush with the other column.
%\enlargethispage{-5in}

% that's all folks
\end{document}